\title{The model equation of soliton theory}
\author{V.E. Adler, A.B. Shabat}
\date{\small Landau Institute for Theoretical Physics, RAS,\\
 1A pr. Ak. Semenova, 142432 Chernogolovka, Russia. \\
 E-mail: {\tt adler@itp.ac.ru, shabat@itp.ac.ru}}
\def\Integer{\mathbb Z}
\def\pd  {\partial}
\def\ti  {\widetilde}
\def\a{\alpha}
\def\b{\beta}
\def\g{\gamma}
\def\d{\delta}
\def\eps{\varepsilon}
\def\ka{\kappa}
\def\la{\lambda}
\def\ph{\varphi}
\def\const {\mathop{\rm const}\nolimits}
\def\tr    {\mathop{\rm tr}\nolimits}
\def\diag  {\mathop{\rm diag} \nolimits}
\def\<{\langle}
\def\>{\rangle}
\def\defeq {\stackrel{\mbox{\rm\small def}}{=}}
\numberwithin{equation}{section}
\theoremstyle{plain}
  \newtheorem{theorem}{Theorem}
  \newtheorem{lemma}[theorem]{Lemma}
  \newtheorem{statement}[theorem]{Statement}
  \newtheorem{corollary}[theorem]{Corollary}
\theoremstyle{definition}
\theoremstyle{remark}
  \newtheorem{example}{Example}
\begin{document}\maketitle
\thispagestyle{empty}

\begin{abstract}
We consider an hierarchy of integrable $1+2$-dimensional equations related
to Lie algebra of the vector fields on the line. The solutions in
quadratures are constructed depending on $n$ arbitrary functions of one
argument. The most interesting result is the simple equation for the
generating function of the hierarchy which defines the dynamics for the
negative times and also has applications to the second order spectral
problems. A rather general theory of integrable $1+1$-dimensional equations
can be developed by study of polynomial solutions of this equation under
condition of regularity of the corresponding potentials.
\end{abstract}

\section{Introduction}

In this paper we study the equation
\begin{equation}\label{sme}
 D_\tau G(\la)=\frac{\<G(\la),G(\mu)\>}{\la-\mu}
\end{equation}
where the unknown function $G(\la)=G(x,\tau,\la)$ is locally analytic on
$\la$ and
\[
 \<f,g\>\defeq fg_x-gf_x.
\]
The assumption on the analyticity is essential since otherwise the general
solution contains too much arbitrariness (cf Example \ref{ex:x^2} below).

The solution of the Cauchy problem with initial data $G_0(\la)=G_0(x,\la)$
can be found as the Taylor expansion
\begin{equation}\label{taun}
 G(\la)=G_0(x,\la)+\tau G_1(x,\la)+\tau^2G_2(x,\la)+\dots
\end{equation}
The substitution into equation yields the recurrence relation
\[
 (n+1)G_{n+1}(\la)=\frac{1}{\la-\mu}\sum_{k=0}^n\<G_k(\la),G_{n-k}(\mu)\>,\quad
 n=0,1,\dots
\]
which allows to compute all coefficients if $G_0$ is a smooth function on
$x$ and $\la$. Notice that if $G_0$ is a polynomial in $\la$ of degree $m$
then this recurrence relation implies that all $G_k(\la)$ are polynomials of
degree not greater $m$ as well. It is not difficult to obtain an estimation
for all coefficients of the expansion and to prove that the convergence
radius is not zero. Thus, the dynamics on $\tau$ preserves the polynomial
structure which correspond to a class of rather interesting  solutions.

The differentiation $D_\tau$ obviously depends on the choice of $\mu$ in
equation (\ref{sme}). It turns out that this arbitrariness leads to the
mutually consistent dynamical equations, as the following theorem states.

\begin{theorem}\label{th:D12}
Let differentiation $D_i$ is defined by the equation (\ref{sme}) at
$\mu=\mu_i$. Then $[D_1,D_2]=0$.
\end{theorem}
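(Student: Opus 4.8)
The plan is to compute $[D_1,D_2]G(\la)$ directly and show it vanishes identically. Write $G=G(\la)$, $G^i=G(\mu_i)$, so that by definition $D_iG(\la)=\<G(\la),G^i\>/(\la-\mu_i)$, and more generally, since the equation holds at every point of the $\la$-line, $D_iG(\nu)=\<G(\nu),G^i\>/(\nu-\mu_i)$ for any $\nu$; in particular $D_iG^j=\<G^j,G^i\>/(\mu_j-\mu_i)$. The key technical point is that $D_i$ is a derivation that also commutes with $\pd_x$ (since $D_i$ acts on functions of $x,\tau$ and the bracket $\<\cdot,\cdot\>$ is built from $\pd_x$, applying $D_i$ to $\<f,g\>$ gives $\<D_if,g\>+\<f,D_ig\>$). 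So I would first record the Leibniz rule
\[
 D_i\<f,g\>=\<D_if,g\>+\<f,D_ig\>,
\]
which reduces everything to bookkeeping.

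**The main computation.** Applying $D_2$ to $D_1G(\la)=\<G,G^1\>/(\la-\mu_1)$ gives
\[
 D_2D_1G(\la)=\frac{1}{\la-\mu_1}\Bigl(\<D_2G,G^1\>+\<G,D_2G^1\>\Bigr)
 =\frac{\<\,\<G,G^2\>/(\la-\mu_2),\,G^1\>}{\la-\mu_1}
 +\frac{\<G,\<G^1,G^2\>\>}{(\la-\mu_1)(\mu_1-\mu_2)}.
\]
Here I must be careful: $D_2G(\la)=\<G,G^2\>/(\la-\mu_2)$ is a function of $x$, so $\<D_2G,G^1\>$ means $\<\,\<G,G^2\>,G^1\>/(\la-\mu_2)$ with the $1/(\la-\mu_2)$ factor pulled out of the $x$-bracket (it is $x$-independent). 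The first term thus becomes $\<\<G,G^2\>,G^1\>\big/\bigl((\la-\mu_1)(\la-\mu_2)\bigr)$. Subtracting the analogous expression for $D_1D_2G(\la)$, the first terms combine into
\[
 \frac{\<\<G,G^2\>,G^1\>-\<\<G,G^1\>,G^2\>}{(\la-\mu_1)(\la-\mu_2)},
\]
and by the Jacobi identity for the bracket $\<\cdot,\cdot\>$ (which holds because $f\mapsto\<f,\cdot\>$ is, up to sign, the Lie-algebra action of vector fields $f\pd_x$) the numerator equals $\<G,\<G^2,G^1\>\>=-\<G,\<G^1,G^2\>\>$. The remaining two terms are $\<G,\<G^1,G^2\>\>/\bigl((\la-\mu_1)(\mu_1-\mu_2)\bigr)$ and $-\<G,\<G^2,G^1\>\>/\bigl((\la-\mu_2)(\mu_2-\mu_1)\bigr)=-\<G,\<G^1,G^2\>\>/\bigl((\la-\mu_2)(\mu_1-\mu_2)\bigr)$.

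**Closing the argument.** Collecting the coefficient of $\<G,\<G^1,G^2\>\>$ in $[D_1,D_2]G(\la)$, I get
\[
 -\frac{1}{(\la-\mu_1)(\la-\mu_2)}+\frac{1}{(\la-\mu_1)(\mu_1-\mu_2)}-\frac{1}{(\la-\mu_2)(\mu_1-\mu_2)},
\]
and the standard partial-fractions identity
\[
 \frac{1}{(\la-\mu_1)(\mu_1-\mu_2)}-\frac{1}{(\la-\mu_2)(\mu_1-\mu_2)}=\frac{1}{(\la-\mu_1)(\la-\mu_2)}
\]
shows this sum is zero. Hence $[D_1,D_2]G(\la)=0$. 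I expect the main obstacle to be purely organizational rather than conceptual: keeping straight which bracket is the $x$-bracket $\<\cdot,\cdot\>$ and which scalar factors $1/(\la-\mu_i)$, $1/(\mu_i-\mu_j)$ are inert under it, and invoking the Jacobi identity for $\<\cdot,\cdot\>$ with the correct sign. It would be worth stating the Jacobi identity as a preliminary lemma, since it is the only nontrivial algebraic input; everything else is the Leibniz rule and the partial-fraction identity above.
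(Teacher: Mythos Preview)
Your proof is correct and follows essentially the same route as the paper's: both invoke the Jacobi identity for the bracket $\<\cdot,\cdot\>$ together with the explicit form of $D_iG(\nu)=\<G(\nu),G(\mu_i)\>/(\nu-\mu_i)$, and both reduce the commutator to an algebraic cancellation. The paper's version is terser---it records the symmetry $D_1g_2=D_2g_1$ (which packages two of your terms) and then simply states that the Jacobi identity $\<\<G,g_1\>,g_2\>+\<\<g_2,G\>,g_1\>+\<\<g_1,g_2\>,G\>=0$ finishes the computation---while you spell out the Leibniz rule and the partial-fraction identity that make this cancellation explicit.
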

\begin{proof}
Denote $g_i=G(\mu_i)$, then equation (\ref{sme}) implies
\[
 D_1g_2=\frac{\<g_2,g_1\>}{\mu_2-\mu_1}=\frac{\<g_1,g_2\>}{\mu_1-\mu_2}=D_2g_1.
\]
It is easy to check that the Jacobi identity
\[
 \<\<G,g_1\>,g_2\>+\<\<g_2,G\>,g_1\>+\<\<g_1,g_2\>,G\>=0
\]
implies the equality $[D_1,D_2](G)=0$.
\end{proof}

It is clear that the Theorem \ref{th:D12} remains valid if we replace
$\<,\>$ by the bracket in an arbitrary Lie algebra. Our original bracket
corresponds to the Lie algebra of the vector fields on the line, another
interesting example is the bracket $\<f,g\>=f_yg_x-f_xg_y$, corresponding to
the Hamiltonian vector fields on the plane. However, we restrict ourselves
by the simplest case since it is already quite nontrivial and illustrative.

The above equation for the functions $g_1$, $g_2$ can be written as an
equation with partial dirivatives $\pd_{\tau_i}=D_i$ for the potential $u$
which is introduced by equations $g_i=u_{\tau_i}$:
\begin{equation}\label{mu12}
 (\mu_i-\mu_j)u_{\tau_i\tau_j}=u_{\tau_i}u_{x\tau_j}-u_{\tau_j}u_{x\tau_i}.
\end{equation}
This differential equation (cf \cite{alonso_shabat}) can be considered as a
simplified version of the original problem, since the analyticity of the
solution on the parameter here is not important. The particular solutions of
this equation are presented in the Section \ref{s:trace}. Obviously,
introducing a new independent variable $\tau_k$ corresponding to $\mu_k$
leads to equation consistent with (\ref{mu12}). Some generalizations with
similar symmetry properties are discussed in the Section \ref{app:triples}.
It would be interesting to compare these examples with the classification
results obtained in the papers \cite{FKPT} in the framework of the method of
hydrodynamic reductions.

Thus, the important feature of the equation (\ref{sme}) is its symmetry
properties. In the case of polynomial in $\la$ solutions the sufficient set
of symmetries allows to integrate equation (\ref{sme}) completely. The
general solution contains, in the case of $n$-th degree polynomials, $n$
arbitrary functions of one variable $\ka_i(\la)$ and the problem is reduced
to the interpolation of the polynomial in the given set of points
$G_x|_{\la=\g_i}=\ka_i(\g_i)$ (see Section \ref{s;polsol}). The zeroes
$\g_i$ of the polynomial $G(\la)$ play the role of the Riemannian invariants
for the system of hyperbolic equations on the coefficients of $G(\la)$ (cf
\cite{EF}).

In the Section \ref{s:flows} we show that equation (\ref{sme}) rewritten in
terms of Laurent expansions of the function $G(\la)$ generate an infinite
sequence of commuting vector fields which can be interpreted as the
additional symmetries of equations (\ref{mu12}).

The Appendix is devoted to the applications of equation (\ref{sme}) to the
spectral problems of the second order
\[
 \psi_{xx}=U(x,\la)\psi.
\]
We discuss there the problem of construction of $G(x,\la)$ for a wide class
of potentials $U(x,\la)$. The one-to-one correspondence between the function
$G(x,\la)$ and the potential $U(x,\la)$ can be achieved if we waive the
polynomiality in $\la$. In the direct problem this is equivalent to solving
of Riccati equation, while the solution of the inverse problem defines the
potential $U(x,\la)$ as the Schwarz derivative of $1/G(x,\la)$ with respect
to $x$.

The equation (\ref{sme}) admits not only polynomial in $\la$ solutions, but
also the rational ones. The generalization of Dubrovin equations for this
case and analysis of possible applications to the KdV-like equations is an
interesting open problem.

We finish this introduction by an example which demonstrates the importance
of the choice of the suitable analytical structure of $G(\la)$.

\begin{example}\label{ex:x^2}
The equation (\ref{sme}) admits the reduction
\begin{equation}\label{abc}
 G=a(\tau,\la)x^2+2b(\tau,\la)x+c(\tau,\la)
\end{equation}
which leads to the system
\[
 \frac{da}{d\tau}=2\{a,b\},\quad \frac{dc}{d\tau}=2\{b,c\},\quad
 \frac{db}{d\tau}=\{a,c\},\quad
 \{f,g\}\defeq \frac{f(\mu)g(\la)-g(\mu)f(\la)}{\la-\mu}.
\]
It is easy to see that the discriminant $b^2-ac$ does not depend on $\tau$.
The equation (\ref{mu12}) turns into the matrix equation
\[
 (\mu_i-\mu_j)S_{\tau_i\tau_j}=[S_{\tau_i},S_{\tau_j}],\quad
 S=\begin{pmatrix} ~b & ~a \\ -c & -b \end{pmatrix}\in sl_2
\]
which possesses the partial integrals
$\pd_{\tau_j}(S_{\tau_i},S_{\tau_i})=0$ where $(S_1,S_2)=\tr S_1S_2$.

In particular, at $a=0$ the equation becomes linear:
\[
 \frac{df(\la)}{dt}=\frac{f(\la)-f(\mu)}{\la-\mu},\quad
 \frac{d}{d\tau}=2b(\mu)\frac{d}{dt},\quad
 G(\tau,\la)=2b(\la)(x+f(\tau,\la)).
\]
\end{example}

\section{The commuting vector fields}\label{s:flows}

We will assume that the locally analytic functions under consideration are
regular at $\la=0$ (this can be achieved by a shift) and that the infinity
is not the essential singular point. Then the expansions exist
\begin{equation}\label{ab}
 G(\la)=\begin{cases}
  A(\la)=a_0\la^m+a_1\la^{m-1}+a_2\la^{m-2}+\dots,\quad \la\to\infty,\\
  B(\la)=b_0+b_1\la+b_2\la^2+\dots,\quad \la\to0
 \end{cases}
\end{equation}
which converge at $|\la|>\rho_1$ and $|\la|<\rho_2$ respectively.

Let us assume that $\mu$ belongs to the domain of convergence of one of the
power series (\ref{ab}) and rewrite the equation (\ref{sme}) as follows
\[
 (\la-\mu)D_\tau A(\la)=\<A(\la),G(\mu)\>.
\]
Collecting the coefficients at $\la^{m+1}$ one obtains $D_\tau a_0=0$. The
change of the form $\pd_x\to\phi(x)\pd_x$, $G\to G/\phi(x)$ allows to set
$a_0=1$ without loss of generality. Then
\begin{equation}\label{aag}
 D_\tau a_n=\<A_n(\mu),\, G(\mu)\>,\quad
 A_n(\la)=\la^n+a_1\la^{n-1}+a_2\la^{n-2}+\dots+a_n.
\end{equation}
Analogously, the equation
\[
 (\la-\mu)D_\tau B(\la)=\<B(\la),G(\mu)\>
\]
in the neighborhood of $\la=0$ implies
\[
 D_\tau b_n=\<A_{-n-1}(\mu),G(\mu)\>,\quad
 A_{-1}(\mu)=-\frac{b_0}{\mu},\quad
 A_{-2}(\mu)=-\frac{b_0}{\mu^2}-\frac{b_1}{\mu},~\dots
\]
Therefore, the expansions (\ref{ab}) and the equation (\ref{sme}) bring to
an infinite sequence of the polynomials $A_n$:
\begin{equation}\label{aba}
 A_n(\la)=\la A_{n-1}(\la)
  +\left\{\begin{array}{ll}
       a_n, & n>0\\
    b_{-n}, & n<0
   \end{array},\right.\quad
 A_0=1,\quad A_{-1}(\la)=-\frac{b_0}{\la}.
\end{equation}
The polynomials $A_n$, $n\in\Integer$ and Theorem \ref{th:D12} allow to
rewrite the original equation (\ref{sme}) in the form of a system of
equations for the coefficients of the series $A(\la)$ and $B(\la)$. In order
to do this we introduce, following the papers \cite{alonso_shabat}, the
sequence of the vector fields ${\cal L}_n$, $n\in\Integer$ as the
differential operators
\begin{equation}\label{dan}
 {\cal L}_n= D_n-A_n(\la)D_0,\quad D_n=\frac{\pd}{\pd t_n},\quad
 D_0=\frac{\pd}{\pd x}.
\end{equation}
The hierarchy of the corresponding times will be denoted as ${\bf
t}=(\dots,t_{-1},t_0,t_1,\dots)$.

One can prove, by use of the above equations for $D_\tau a_n$ and $D_\tau
b_n$, that the commutativity of the differentiations from the Theorem
\ref{th:D12} leads to the commutation relations
\begin{equation}\label{da}
 [{\cal L}_m,{\cal L}_n]=0 \quad\Leftrightarrow\quad D_mA_n-D_nA_m=\<A_m,A_n\>,
 \quad \forall~m,n\in\Integer.
\end{equation}
Finally, these relations between the polynomials (\ref{aba}) can be
rewritten intermediately, as equations for $G(\la)$:
\begin{equation}\label{dna}
 D_nG(\la)=\<A_n(\la),\,G(\la)\>,\quad G(\la)=G({\bf t},\la),\quad n\in\Integer.
\end{equation}
The equivalence of (\ref{da}) and (\ref{dna}) is proved along the standard
scheme \cite{ush}.

Substitution of the series (\ref{ab}) into equation (\ref{dna}) yields
\begin{equation}\label{dnab}
 D_n A(\la)=\<A_n(\la),A(\la)\>,\quad
 D_n B(\la)=\<A_n(\la),B(\la)\>,\quad n\in\Integer.
\end{equation}
In particular, for the first ``negative'' time $t_{-1}$ we have
\[
 D_{-1}A(\la)=\frac{b_{0,x}}{\la}+\frac{\<a_1,b_0\>}{\la^2}
  +\frac{\<a_2,b_0\>}{\la^3}+\dots,\quad
 D_{-1}B(\la)=\<b_1,b_0\>+\<b_2,b_0\>\la+\dots.
\]
This leads to an infinite autonomous system of equations for the variables
$b_i$:
\[
 D_{-1}b_0=\<b_1,b_0\>,\quad D_{-1}b_1=\<b_2,b_0\>,~\dots
\]
which correspond to the value of $\mu$ chosen in the convergence domain of
the series $B(\la)$.

Analogously, for the first ``positive'' flow $D_1=\pd_{t_1}$, the basic
equations (\ref{dna}) give
\[
 D_1A(\la)=\<\la+a_1,\frac{a_2}{\la^2}+\frac{a_3}{\la^3}+\dots\>,\quad
 D_1B(\la)=\<a_1,b_0\>+ (b_{0,x}+\<a_1,b_1\>)\la+\dots,
\]
and this is equivalent to an infinite sequence of equations for the
variables $a_i$:
\begin{equation}\label{dta}
 D_1a_1=a_{2,x},\quad
 D_1a_2=a_{3,x}+\<a_1,a_2\>,\quad D_1a_3=a_{4,x}+\<a_1,a_3\>,\dots
\end{equation}
corresponding to the value of $\mu$ in the convergence domain of the series
$A(\la)$.

\section{Polynomial solutions}\label{s;polsol}

In the polynomial case $G=\la^n+g_1\la^{n-1}+\dots+g_n$ the infinite
hierarchy of the times $\bf t$ and the corresponding polynomials (\ref{aba})
is reduced to the finite basis
\begin{equation}\label{dig}
 D_i(G)=\<G_i,G\>,\quad G_i=\la^i+g_1\la^{i-1}+\dots+g_i.
\end{equation}
Any given differentiation corresponding to an arbitrary $\mu$ can be
expanded over this basis. In particular, $\pd_x=D_0$, and the value $\mu=0$
corresponds to the differentiation $D_{n-1}$. In this latter case the
equations for the coefficients are of the form
\begin{equation}\label{gjtau}
 g_{1,t_{n-1}}=g_{n,x},\quad g_{2,t_{n-1}}=\<g_1,g_n\>,\quad\dots,\quad
 g_{n,t_{n-1}}=\<g_{n-1}, g_n\>.
\end{equation}
This system possesses rather many applications. For example, the case $n=2$
\[
 g_{1,t_1}=g_{2,x},\quad g_{2,t_1}=\<g_1,g_2\>
 \quad\Leftrightarrow\quad u_{t_1t_1}=\<u_x,u_{t_1}\>
\]
describes the Chaplygin model in gas dynamics.

In the general case, we choose the zeroes of $G(\la)$ as the dynamical
variables:
\[
 G(\la)=(\la-\g_1)(\la-\g_2)\cdots(\la-\g_n)
\]
then equation (\ref{sme}) at $\la=\g_i$ implies
\[
 D_\tau G(\la)\big|_{\la=\g_i}=\frac{G(\mu)G_x(\la)}{\mu-\la}\big|_{\la=\g_i}
 \quad\Leftrightarrow\quad
 D_\tau\g_i=\frac{G(\mu)}{\mu-\g_i}\g_{i,x},\quad i=1,\dots,n
\]
or
\begin{equation}\label{dtau}
  D_\tau\g_{1}=(\mu-\g_2)\cdots(\mu-\g_n)\g_{1,x},~\dots,~
  D_\tau\g_{n}=(\mu-\g_1)\cdots(\mu-\g_{n-1})\g_{n,x}.
\end{equation}
Comparing this with (\ref{dig}) at $\la=\g_i$ we obtain
\begin{equation}\label{mutau}
 D_\tau=\mu^{n-1}D_0+\mu^{n-2}D_1+\dots+D_{n-1}.
\end{equation}

The problem of solving (\ref{dtau}) admits various settings, and we will
construct the general solution following the paper \cite{EF}. In virtue of
the Theorem \ref{th:D12} and the formula (\ref{mutau}), the differentiations
$D_i$ mutually commute. We identify $g_1,\dots,g_n$ with the elementary
symmetric polynomials
\[
 g_1=-\sum\g_k,\quad g_2=\sum_{k<l}\g_k\g_l,\quad
 g_3=-\sum_{k<l<m}\g_k\g_l\g_m,~\dots
\]
and introduce the notations
\[
 g_{i,j}=g_i|_{\g_j=0}\quad\Rightarrow\quad
 g_{1,1}=-\g_2-\dots-\g_n,\quad \dots,\quad
 g_{n-1,n}=(-1)^n\g_1\cdots\g_{n-1}.
\]
This bring to the following statement.

\begin{statement}\label{th:digj}
Integration of the equations (\ref{dtau}) is equivalent to the integration
of  consistent system of $n(n-1)$ equations
\begin{equation}\label{digj}
 D_i(\g_j)=g_{i,j}D_0(\g_j),\quad i=1,\dots,n-1,\quad j=1,\dots,n.
\end{equation}
\end{statement}

It is worth to notice that the equation (\ref{digj}) with the number $(i,j)$
is obtained from the corresponding equation (\ref{dig}) under the
substitution $\la=\g_j$ and that
\[
 g_{i,j}=G_i(\g_j)=\g^i_j+\g^{i-1}_jg_1+\dots+g_i
        =-\g^{i-n}_j(\g^{n-i-1}_jg_{i+1}+\dots+g_n),\quad i<n.
\]
Notice that the zeroes $\g_j$ are Riemannian invariants, that is, this
change of variables brings each of $n-1$ quasilinear system of first order
equations (\ref{dig}) to the diagonal form (recall that (\ref{gjtau})
corresponds to $i=n-1$).

\begin{theorem}\label{th:polsol}
Let $D_i=\pd_{t_i}$, $i=0,\dots,n-1$. Then the general solution of the
overdetermined system (\ref{digj}) is given by the formula
\begin{equation}\label{dtg}
 \left[\begin{array}{c}
   dt_0\\ dt_1\\ \dots\\ dt_{n-1}
 \end{array}\right]=
 \left[\begin{array}{cccc}
  \g_1^{n-1}& \g_2^{n-1}& \dots &\g_n^{n-1} \\
  \g_1^{n-2}& \g_2^{n-2}& \dots &\g_n^{n-2} \\
                     &&\dots\dots &\\
           1&          1& \dots &   1
 \end{array}\right]
 \left[\begin{array}{c}
  \ka_1^{-1}(\g_1)d\g_1\\
  \ka_2^{-1}(\g_2)d\g_2\\
  \dots\\
  \ka_n^{-1}(\g_n)d\g_n
 \end{array}\right].
\end{equation}
\end{theorem}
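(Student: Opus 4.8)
The plan is to invert the dependence: treat $\g_1,\dots,\g_n$ as the independent variables and $t_0,\dots,t_{n-1}$ as functions of them. The algebraic fact that makes this work concerns the Vandermonde matrix $M=(\g_l^{\,n-1-i})$, $i=0,\dots,n-1$, $l=1,\dots,n$, which is exactly the matrix occurring in (\ref{dtg}), and the matrix $N=(g_{i,j})$. From the recursion $g_{i,j}=\g_jg_{i-1,j}+g_i$, $g_{0,j}=1$, $g_{n,j}=G(\g_j)=0$ (implicit in the remark after Statement \ref{th:digj}) one obtains Horner's identity $G(\la)=(\la-\g_j)\sum_{i=0}^{n-1}g_{i,j}\la^{n-1-i}$; evaluating it at $\la=\g_l$ and using $G(\g_l)=0$ for $l\ne j$ and $G(\la)/(\la-\g_j)\to G'(\g_j)$ as $\la\to\g_j$ yields the orthogonality relations $\sum_{i=0}^{n-1}g_{i,j}\g_l^{\,n-1-i}=G'(\g_j)\d_{jl}$, i.e. $NM=\diag(G'(\g_1),\dots,G'(\g_n))$ with $G'(\g_j)=\prod_{k\ne j}(\g_j-\g_k)$. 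In particular, as long as the $\g_l$ are pairwise distinct, $M$ is invertible and $(M^{-1})_{ji}=g_{i,j}/G'(\g_j)$.

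First I would verify that (\ref{dtg}) produces a solution. The one-forms on its right-hand side are closed for the trivial reason that the coefficient $\g_l^{\,n-1-k}\ka_l^{-1}(\g_l)$ of $d\g_l$ depends on $\g_l$ alone, so $t_k=t_k(\g_1,\dots,\g_n)$ is well defined up to an additive constant, with Jacobian $(\pd t_k/\pd\g_l)=M\,\diag(\ka_l^{-1}(\g_l))$. If the $\ka_l$ are nowhere zero this is invertible, so by the inverse function theorem $\g_l=\g_l({\bf t})$ and, using the formula for $M^{-1}$ above,
\[
 D_k\g_l=\frac{\pd\g_l}{\pd t_k}=\ka_l(\g_l)(M^{-1})_{lk}=\frac{\ka_l(\g_l)}{G'(\g_l)}g_{k,l}.
\]
Taking $k=0$ and using $g_{0,l}=1$ gives $D_0\g_l=\ka_l(\g_l)/G'(\g_l)$, and dividing one term by the other gives $D_k\g_l=g_{k,l}D_0\g_l$, which is precisely the system (\ref{digj}). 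This also shows that the $n$ arbitrary functions $\ka_j$ of one variable and the $n$ integration constants are built into the formula.

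It remains to show (\ref{dtg}) is the \emph{general} solution, and this is the step I expect to cost the most. Given any solution $\{\g_j({\bf t})\}$ of (\ref{digj}) for which the $\g_j$ are functionally independent (the generic case, since the $\g_j$ are the Riemannian invariants), the equations $D_i\g_j=g_{i,j}D_0\g_j$ say that row $j$ of the Jacobian $(\pd\g_j/\pd t_i)$ equals $(\pd_{t_0}\g_j)(g_{0,j},\dots,g_{n-1,j})$, so this Jacobian is $\diag(\pd_{t_0}\g_j)N$; inverting and using $N^{-1}=M\diag(G'(\g_j))^{-1}$ gives $\pd t_i/\pd\g_j=\g_j^{\,n-1-i}h_j$ with $h_j:=1/(G'(\g_j)\pd_{t_0}\g_j)$. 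The only genuinely nontrivial point is to see that $h_j$ depends on $\g_j$ alone. For this I would use the equality of mixed second derivatives $\pd_{\g_l}(\g_j^{\,n-1-i}h_j)=\pd_{\g_j}(\g_l^{\,n-1-i}h_l)$ for $l\ne j$ at the two consecutive values $i=n-1$ and $i=n-2$: the former gives $\pd_{\g_l}h_j=\pd_{\g_j}h_l$, the latter $\g_j\pd_{\g_l}h_j=\g_l\pd_{\g_j}h_l$; substituting the former into the latter gives $(\g_j-\g_l)\pd_{\g_l}h_j=0$, hence $\pd_{\g_l}h_j=0$. So $h_j=h_j(\g_j)$, and putting $\ka_j(\g_j):=1/h_j(\g_j)=G'(\g_j)\pd_{t_0}\g_j$ exhibits the given solution in the form (\ref{dtg}). (For $n=1$ the statement is vacuous.) Apart from this last paragraph everything is the standard Lagrange-interpolation bookkeeping.
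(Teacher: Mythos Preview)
Your argument is correct, and it takes a genuinely different route from the paper's. The paper does not invert the Vandermonde matrix at the outset; instead it postulates an autonomous $x$-dynamics $\g_{j,x}=R_j(\g_1,\dots,\g_n)$ and determines the $R_j$ from the commutator conditions $[D_\tau,D_0](\g_j)=0$, where $D_\tau$ is the $\mu$-parametrized flow (\ref{dtau}). Substituting $\mu=\g_k$ ($k\ne j$) into this commutator yields $\pd_{\g_k}\log R_j=1/(\g_j-\g_k)$, whence $R_j=\ka_j(\g_j)/\prod_{k\ne j}(\g_j-\g_k)$; only then is the Jacobian assembled and identified with $\diag(\ka_j)W^{-1}$. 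Your proof bypasses the generating parameter $\mu$ entirely: you write down the Horner/Lagrange inverse $N M=\diag(G'(\g_j))$ first, verify (\ref{dtg}) by direct substitution, and for generality extract the separability $\pd_{\g_l}h_j=0$ from the closedness of the 1-forms $dt_i$ at the two indices $i=n-1,n-2$. The paper's approach stays closer to the organising principle of the article (everything flows from equation (\ref{sme}) with a running $\mu$), while yours is more self-contained and makes the Lagrange-interpolation structure --- which the paper records only afterwards as a Corollary --- transparent from the start. Both arguments assume the same genericity (pairwise distinct $\g_j$ and functional independence), and the logical content of your mixed-partial step is the same as the paper's commutator computation read in the $\g$-variables.
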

\begin{proof}
Assume, in addition to (\ref{digj}), that the dynamics on $x$ of the zeroes
is defined by the system
\begin{equation}\label{d0}
 \g_{1,x}=R_1(\g_1,\dots,\g_n),~\dots,~ \g_{n,x}=R_n(\g_1,\dots,\g_n)
\end{equation}
and let us find the form of the functions $R_i$ by use of the conditions of
consistency with (\ref{digj}) which are obtained by cross-differentiation.
For example, the use of (\ref{dtau}) gives
\begin{gather*}
 D_0\g_1=R_1,\quad
 D_\tau\g_1=(\mu-\g_2)\cdots(\mu-\g_n)R_1 \quad\quad\Rightarrow\quad\\
 [D_\tau,D_0](\g_1)=D_\tau R_1-D_0(\mu-\g_2)\cdots(\mu-\g_n)R_1.
\end{gather*}
Substituting $\mu=\g_2,\dots,\g_n$ into the consistency condition
$[D_\tau,D_0](\g_1)=0$, we obtain
\[
 \frac{\pd\log R_1}{\pd \g_j}=\frac{1}{\g_1-\g_j},\quad
 \frac{\pd}{\pd\g_j}\log\Bigl(R_1\prod_{j=2}^n(\g_1-\g_j)\Bigr)=0,\quad
 j=2,\dots,n.
\]
The other consistency conditions give respectively
\[
 \frac{\pd\log R_i}{\pd \g_j}=\frac{1}{\g_i-\g_j},\quad
 \forall~i\ne j \quad\Rightarrow\quad
  R_i=\frac{\ka_i(\g_i)}{\prod'(\g_i-\g_j)}
\]
with $n$ arbitrary functions $\ka_1(\g_1),\dots,\ka_n(\g_n)$.

Apparently, the found additional dynamical system
\begin{equation}\label{ef}
 \g_{1,x}=\frac{\ka_1(\g_1)}{(\g_1-\g_2)\cdots(\g_1-\g_n)},\quad
 \g_{2,x}=\frac{\ka_2(\g_2)}{(\g_2-\g_1)(\g_2-\g_3)\cdots(\g_2-\g_n)},~\dots
\end{equation}
explicitly defines, together with (\ref{digj}), the Jacobi matrix $n\times
n$
\begin{equation}\label{jac}
 J=\frac{\pd(\g_1,\dots,\g_n)}{\pd(t_0,\dots,t_{n-1})}
\end{equation}
of the partial derivatives of $n$ zeroes with respect to $n$ independent
variables ($t_0=x$). One can straightforwardly check that $J$ coincide with
the inverse of the Vandermonde matrix $W$ from (\ref{dtg}), up to the left
multiplication by the diagonal matrix:
\[
 J=\diag(\ka_1(\g_1),\ka_2(\g_2),\dots,\ka_n(\g_n))W^{-1}.
\]
In order to finish the proof, we note that the formula (\ref{dtg}) rewritten
as
\begin{equation}\label{dtx}
 dt_{n-1}=\frac{d\g_1}{\ka_1(\g_1)}+\dots+\frac{d\g_n}{\ka_n(\g_n)},
 \quad\dots,\quad
 dx=\frac{\g_1^{n-1}d\g_1}{\ka_1(\g_1)}+\dots+\frac{\g_n^{n-1}d\g_n}{\ka_n(\g_n)}
\end{equation}
allows to solve in quadratures the problem of finding $t_0,\dots,t_{n-1}$ as
functions on $\g_1,\dots,\g_n$.
\end{proof}

Obviously, the degree of the polynomial $G_x(\la)$ is $(n-1)$ and equation
(\ref{ef}) and relations
\[
 G_x(\la)=-\g_{1,x}(\la-\g_2)\cdots(\la-\g_n)
  -\ldots-\g_{n,x}(\la-\g_1)\cdots(\la-\g_{n-1})
\]
imply that in order to calculate the Jacobian (\ref{jac}) and to integrate
the differential equations (\ref{dig}) it is sufficient to choose,
arbitrarily, the functions
\begin{equation}\label{kapj}
 G_x(\la)\big|_{\la=\g_j}=-\ka_j(\g_j),\quad j=1,\dots,n.
\end{equation}

\begin{corollary}
Theorem \ref{th:polsol} allows to find all derivatives $D_i(G)$ directly in
terms of $\g_1,\dots,\g_n$, without integrating differential equations
(\ref{dig}). In particular, $D_0(G)=G_x$ is defined by Lagrange
interpolation formula:
\[
 G_x(\la)=-\ka_1(\g_1)\frac{(\la-\g_2)\cdots(\la-\g_n)}
                              {(\g_1-\g_2)\cdots(\g_1-\g_n)}
    -\dots-\ka_n(\g_n)\frac{(\la-\g_1)\cdots(\la-\g_{n-1})}
                              {(\g_n-\g_1)\cdots(\g_n-\g_{n-1})}.
\]
\end{corollary}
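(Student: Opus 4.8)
The plan is purely algebraic: once the zeroes $\g_j$ are used as coordinates, each $D_i(G)$ is a polynomial in $\la$ whose coefficients are rational functions of $\g_1,\dots,\g_n$ assembled from the data $\ka_j$, and it can be written down without solving any of the equations (\ref{dig}). First I would differentiate $G(\la)=\prod_{k=1}^n(\la-\g_k)$ along $t_i$; since $D_i$ acts only on the zeroes, the Leibniz rule gives
\[
 D_i(G)(\la)=-\sum_{j=1}^n D_i(\g_j)\prod_{k\ne j}(\la-\g_k),
\]
which already shows that $D_i(G)$ has degree at most $n-1$ in $\la$ (omitting one factor lowers the degree by one), even though the naive estimate from $\<G_i,G\>=G_iG_x-GG_{i,x}$ is larger — the needed cancellation is automatic here.

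Next I would substitute the values of $D_i(\g_j)$ that are already in hand. By (\ref{digj}) one has $D_i(\g_j)=g_{i,j}\,D_0(\g_j)=g_{i,j}\,\g_{j,x}$, and the auxiliary system (\ref{ef}) produced in the proof of Theorem~\ref{th:polsol} gives $\g_{j,x}=\ka_j(\g_j)/\prod_{k\ne j}(\g_j-\g_k)$. Inserting these yields the closed form
\[
 D_i(G)(\la)=-\sum_{j=1}^n g_{i,j}\,\ka_j(\g_j)\,
  \frac{\prod_{k\ne j}(\la-\g_k)}{\prod_{k\ne j}(\g_j-\g_k)},
\]
in which $g_{i,j}=G_i(\g_j)$ are (signed) elementary symmetric polynomials in the remaining zeroes $\{\g_k:k\ne j\}$. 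This is exactly the asserted expression: the right-hand side depends only on $\g_1,\dots,\g_n$ and on the prescribed functions $\ka_j$, so no integration of (\ref{dig}) is involved. Taking $i=0$ and using $G_0=1$, hence $g_{0,j}=1$, collapses the formula to the Lagrange interpolation polynomial displayed in the corollary; equivalently, $G_x$ has degree $n-1$ and is determined uniquely by its $n$ values (\ref{kapj}).

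I do not anticipate a real obstacle. The single point requiring a little care is the degree bookkeeping, i.e. verifying that the terms of degree above $n-1$ in $G_iG_x-GG_{i,x}$ genuinely cancel; but this is made transparent by the product formula above. With that in place, and with $D_i(\g_j)=g_{i,j}\g_{j,x}$ and (\ref{ef}) already available, the corollary follows by a one-line substitution.
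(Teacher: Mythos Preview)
Your argument is correct and matches the paper's own route: the paper also records $G_x(\la)=-\sum_j\g_{j,x}\prod_{k\ne j}(\la-\g_k)$ together with (\ref{ef}) and (\ref{kapj}), from which the Lagrange interpolation formula (and, via (\ref{digj}), the general $D_i(G)$ expression you wrote) follows immediately. Your extra remark that the degree drop in $\<G_i,G\>$ is made transparent by the product formula is a nice touch, but there is nothing substantively different from the paper.
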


The independent variable $t_{n-1}$ corresponding to $\mu=0$ can be replaced
with $\tau$. The transition from the variables $t_0=x$, $t_1,\dots,t_{n-1}$
to the variables $t_0$, $t_1,\dots,t_{n-2},\tau$ is defined by the
Jacobian
\[
 \frac{\pd(t_0,\dots,\tau)}{\pd(\g_1,\dots,\g_n)}=
 \frac{\pd(t_0,\dots,\tau)}{\pd(t_0,\dots,t_{n-1})}
 \frac{\pd(t_0,\dots, t_{n-1})}{\pd(\g_1,\dots,\g_n)}
\]
Since $d\tau=\mu^{n-1}dt_0+\mu^{n-2}dt_1+\dots+dt_{n-1}$ we obtain
\[
 d\tau=\sum\frac{\ka(\g_i)d\g_i}{\ka_i(\g_i)},\quad
   \ka(\g)=(\mu\g)^{n-1}+(\mu\g)^{n-2}+\dots+1.
\]
In order to apply Theorem \ref{th:polsol} to equation (\ref{mu12}) it is
sufficient to express the potential through the seroes $\g_j(x,\a,\b)$,
$j=1,2\dots$:
\[
 du=\frac{\g_1^nd\g_1}{\ka_1(\g_1)}+\dots
    +\frac{\g_n^nd\g_n}{\ka_n(\g_n)} \quad\Rightarrow\quad
 D_ju=g_j.
\]

\section{The map $G\to U$}\label{s:trace}

Let us consider the classification of the polynomial solutions $G(\la)$ of
equation (\ref{sme}) based on the map
\begin{equation}\label{ukg}
 G\to U:~ 4U=\frac{K(\la)}{G^2}-\frac{G_x^2}{G^2}+2\frac{G_{xx}}{G}.
\end{equation}
Notice, that the kernel of the map $G\to U$ is not trivial: it consists of
the quadratic polynomials in $x$ (\ref{abc}) (see Example \ref{ex:x^2}.)
Indeed, the equation (\ref{ukg}) after multiplying by $G^2$ and
differentiating yields the third order linear equation
\begin{equation}\label{***}
 2UG_x+U_xG=\frac12G_{xxx}.
\end{equation}
Therefore, if $U=0$ then $G_{xxx}=0$. Also, notice that in virtue of the
homogeneity of the equation (\ref{ukg})), the change
\[
 \ti G=\ka(\la)G,\quad \ti K=\ka^2K \quad\Rightarrow\quad \ti U=U
\]
does not change $U$. This allows to consider as equivalent the polynomial
solutions $\ti G=\ka(\la)G$ and $G$ which differ by a factor $\ka(\la)$
depending on $\la$ only.

The use of the map (\ref{ukg}) for the classification of the polynomial
solutions $G(\la)$ is related, above all, with the theory, developed by
S.P.~Novikov school, of the finite-gap potentials $U$ for the second order
spectral problems
\begin{equation}\label{psixx}
 \psi_{xx}=U(x,\la)\psi.
\end{equation}
In connection with these applications to the spectral theory the following
formulae are useful:
\begin{equation}\label{ugf}
 f^\pm\defeq \frac{G_x}{2G}\pm\frac{\sqrt{K(\la)}}{2G}
 \quad\Rightarrow\quad
 f^\pm_x+(f^\pm)^2=\frac{K(\la)}{4G^2}+\frac{G_{xx}}{2G}
   -\frac{G_x^2}{4G^2}=U,
\end{equation}
relating the equation (\ref{ukg}) with Riccati equation $f_x+f^2=U$. In
Appendix we discuss the {\em direct problem} of construction, for the given
potential $U$, of two special solutions $f=f^\pm$ of Riccati equation such
that
\[
 f^+-f^-=\frac{\sqrt{K(\la)}}{G}.
\]
This allows, in particular, to interpret the function $K(\la)$ in equation
(\ref{ukg}) as the Wronskian of two solutions $\psi^\pm$ of the spectral
problem (\ref{psixx}) related to $f=f^\pm$ by the standard rule
$(\log\psi^\pm)_x=f^\pm$.

The zeroes $\la=\g_i$ of the polynomial solutions $G(\la)$ of the equation
(\ref{sme}) lead, in general, to the poles of the potential defined by the
formula (\ref{ukg}). This restricts its use in the applications where the
analytic structure of $U(\la)$ is fixed a priori. The theorem below explains
how to get rid of these unwanted poles by the special choice of the
functions $\ka_j(\la)$ in equation (\ref{kapj}).

\begin{theorem}
The potential $U(\la,x)$ as function on $\la$ does not possess the moving
poles $\la=\g_j(x)$, $\g_{j,x}\ne0$ if and only if the following {\em
regularity conditions} are fulfilled:
\begin{equation}\label{kak}
 \ka_i(\la)=\pm\sqrt{K(\la)},\quad i=1,\dots,n.
\end{equation}
\end{theorem}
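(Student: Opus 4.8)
The plan is to localize the analysis at a single moving zero $\la=\g_j(x)$ of $G$ and to read off the principal part of the Laurent expansion of $U$ in powers of $\la-\g_j$ directly from the Riccati representation (\ref{ugf}), rather than by brute force on (\ref{ukg}). Write $G(\la)=(\la-\g_j)H(\la)$ with $H(\g_j)=h_0\defeq\prod_{k\ne j}(\g_j-\g_k)\ne0$ (a simple zero, the generic situation), and recall that the product form of $G$ together with (\ref{kapj}) — equivalently (\ref{ef}) — gives $\g_{j,x}=\ka_j(\g_j)/h_0$, which is nonzero precisely when the zero moves. Substituting $G=(\la-\g_j)H$ and $G_x=-\g_{j,x}H+(\la-\g_j)H_x$ into $f^\pm=G_x/(2G)\pm\sqrt{K}/(2G)$, and noting that $H_x/H$ and $\sqrt{K}/H$ are analytic at $\la=\g_j$ (here $K(\g_j)\ne0$, since otherwise $\ka_j(\g_j)=0$ and the zero would not move), one finds that each of $f^+$, $f^-$ has at most a simple pole there,
\[
 f^\pm=\frac{r^\pm}{\la-\g_j}+(\text{regular}),\qquad
 r^\pm=\frac{1}{2h_0}\bigl(-\ka_j(\g_j)\pm\sqrt{K(\g_j)}\bigr),\qquad
 r^+r^-=\frac{\ka_j(\g_j)^2-K(\g_j)}{4h_0^2}.
\]

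Next I would feed this into $U=f^\pm_x+(f^\pm)^2$ from (\ref{ugf}). Since $r^\pm$ does not depend on $\la$ while $\pd_x(\la-\g_j)^{-1}=\g_{j,x}(\la-\g_j)^{-2}$, a simple pole $r/(\la-\g_j)$ in $f$ contributes the second-order coefficient $r\g_{j,x}+r^2$ to $f_x+f^2$; a one-line reduction using $\g_{j,x}=\ka_j(\g_j)/h_0$ turns this into $-r^+r^-$, the same value whether one starts from $f^+$ or $f^-$ (as it must be, $U$ being symmetric in $f^\pm$). Hence $U$ is regular at the moving zero $\la=\g_j(x)$ if and only if $r^+r^-=0$, i.e.\ $\ka_j(\g_j)^2=K(\g_j)$. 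Indeed, if $r^+r^-=0$ then, say, $r^+=0$, so $f^+$ itself is regular at $\g_j$ and therefore so is $U=f^+_x+(f^+)^2$; conversely, if $U$ is regular there, then evaluating the polynomial $4UG^2=K-G_x^2+2GG_{xx}$ at $\la=\g_j$, where it equals $K(\g_j)-\ka_j(\g_j)^2$, forces that expression to vanish. Moreover these moving zeroes are the only candidates for moving poles of $U$ at all, since $K$, $G_x$, $G_{xx}$ are polynomials in $\la$ and the poles of $U=(K-G_x^2+2GG_{xx})/(4G^2)$ can lie only at zeroes of $G$.

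It remains to pass from the pointwise identity $\ka_j(\g_j(x))^2=K(\g_j(x))$ to the functional identity (\ref{kak}). Fixing the other times and letting $x$ run over an interval on which $\g_{j,x}\ne0$, the value $\g_j$ sweeps out an interval of the $\la$-line, so $\ka_j(\la)^2=K(\la)$ on that interval; since $K$ cannot vanish there, the sign of $\ka_j/\sqrt{K}$ is locally constant, giving $\ka_j(\la)=\pm\sqrt{K(\la)}$. Running this over $j=1,\dots,n$ yields the regularity conditions, and, conversely, if (\ref{kak}) holds for every $j$ the previous paragraph shows $r^+r^-=0$ at each moving zero, whence $U$ has no moving poles.

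The step I expect to require real care — beyond the routine bookkeeping of the Laurent coefficients, which is mechanical once the substitution $G=(\la-\g_j)H$ is made — is the treatment of the degenerate configurations: isolated values of the times at which two zeroes collide ($h_0=0$) or at which $K(\g_j)=0$. These I would dispose of by continuity, observing that they form a lower-dimensional set in the time variables, that away from it regularity of $U$ forces (\ref{kak}), and that once (\ref{kak}) holds the function $U$ — equivalently, the regular solution $f^+$ or $f^-$ of the Riccati equation singled out by the vanishing residue — extends smoothly across these exceptional points as well.
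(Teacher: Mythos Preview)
Your proof is correct and takes a genuinely different route from the paper's. The paper works globally: it expands $1/G$, $\Gamma=G_x/G$ and $K/G^2$ in partial fractions over all the zeroes at once, reads off the Dubrovin equations $\g_{j,x}=\sqrt{K(\g_j)}/\prod_{k\ne j}(\g_j-\g_k)$ from the vanishing of the second-order poles in $4U=K/G^2+2\Gamma_x+\Gamma^2$, and then, for sufficiency, differentiates these to obtain $\g_{j,xx}$ and verifies by hand that the first-order poles cancel as well. Your localization at a single $\g_j$ via the Riccati factorization $U=f^\pm_x+(f^\pm)^2$ is more economical: the key observation that $r^+r^-=0$ makes one of $f^\pm$ outright regular disposes of both pole orders at a stroke, so you never need the explicit partial-fraction expansion of $K/G^2$ or the formula for $\g_{j,xx}$. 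The paper's route, in return, yields the Dubrovin equations (\ref{dbr}) as a structural byproduct, and these are used downstream. One small expository slip: your parenthetical ``$K(\g_j)\ne0$, since otherwise $\ka_j(\g_j)=0$'' presupposes $\ka_j=\pm\sqrt K$, which is the conclusion; but since you correctly defer the case $K(\g_j)=0$ to the degenerate configurations handled by continuity at the end, the circularity is only apparent.
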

\begin{proof}
Let us deduce the equations analogous to (\ref{ef}) directly from the basic
equation (\ref{ukg}), using only the regularity condition and without use of
the Statement \ref{th:digj}. To do this substitute
\[
 \frac{1}{G}=\frac{\eps_1}{\g_1-\la}+\dots+\frac{\eps_n}{\g_n-\la},\quad
 \Gamma=\frac{G_x}{G}=\frac{\g_{1,x}}{\g_1-\la}+\dots+\frac{\g_{n,x}}{\g_n-\la},\quad
 \eps_i\defeq\prod_{j\ne i}\frac1{\g_i-\g_j}
\]
into the equation
\begin{equation}\label{ugg}
 4U=\frac{K(\la)}{G^2}+2\Gamma_x+\Gamma^2.
\end{equation}
Vanishing of the coefficients at the poles
\[
  (\la-\g_1)^{-2},\quad (\la-\g_2)^{-2},\dots\quad (\la-\g_n)^{-2}
\]
yields $n$ {\em Dubrovin equations} for $\g_1,\dots,\g_n$:
\begin{equation}\label{dbr}
 \g_{1,x}=\frac{\sqrt{K(\g_1)}}{(\g_1-\g_2)\cdots(\g_1-\g_n)},\quad\dots\quad
 \g_{n,x}=\frac{\sqrt{K(\g_n)}}{(\g_n-\g_1)\cdots(\g_n-\g_{n-1})}.
\end{equation}
Comparing these equations with (\ref{ef}) we see that the conditions
(\ref{kak}) are fulfilled.

It turns out that these conditions are not only necessary but also
sufficient for the regularity of the potential $U$ defined accordingly to
(\ref{ukg}) by solution $G$ of equations (\ref{ef}). Indeed, differentiation
of equations (\ref{dbr}) with respect to $x$ yields
\begin{equation}\label{gxx}
 \g_{i,xx}=\g^2_{i,x}\biggl(\frac12\frac{K'}{K}(\g_i)
   -\sum_{j\ne i}\frac{1}{\g_i-\g_j}\biggr)
   +\sum_{j\ne i}\frac{\g_{i,x}\g_{j,x}}{\g_i-\g_j},\quad i=1,\dots,n.
\end{equation}
On the other hand
\begin{equation}\label{k012}
 \frac{K(\la)}{G^2(\la)}=K_2+K_1+K_0,\quad
 K_2=\sum_1^n\frac{\eps_i^2 K(\g_i)}{(\la-\g_i)^2},\quad
 K_1=\sum_1^n\frac{\eps_i^2 K(\g_i) K_{1,i}}{\la-\g_i}
\end{equation}
where
\[
 K_{1,i}=\frac{d}{d\la}\log K(\la)\big|_{\la=\g_i}
  -2\sum_{j\ne i}\frac{1}{\g_i-\g_j},
\]
and $K_0$ is a regular function. Since, accordingly to (\ref{ugg}),
\[
 4U=\frac{K(\la)}{G^2}-\sum_1^n\frac{\g^2_{i,x}}{(\g_i-\la)^2}
  +2\sum_{i<j}\frac{\g_{i,x}\g_{j,x}}{(\g_i-\la)(\g_j-\la)}
  +2\sum_1^n\frac{\g_{i,xx}}{(\g_i-\la)},
\]
hence we prove that the first and second order poles are cancelled in virtue
of Dubrovin equations (\ref{dbr}).
\end{proof}

It follows from above that the solution $G$ is defined by the functions
$\ka_j(\la)$ in (\ref{dtx}), and if these functions are subjected to the
regularity condition (\ref{kak}) with an analytic function $K(\la)$ then the
potential $U$ is analytic in $\la$. The Liouville theorem says that a
function on $\la$ analytic in the whole extended complex plane is constant.
In particular (see (\ref{ukg})) the polynomial functions $K(\la)$ correspond
to the polynomial in $\la$ potentials $U$. This simple structure of the
potential $U$ is defined in this case by expansion in the neighborhood of
$\la=\infty$ of the function
\begin{equation}\label{hla}
 H(\la)=\frac{\la^n}{G}=1+\frac{h_1}{\la}+\frac{h_2}{\la^2}+\dots,\quad
 h_1=-g_1,\quad h_2=g_1^2-g_2,\dots
\end{equation}
and by the coefficients $c_i$ of the expansion
\[
 \la^{-m}K(\la)=4\Bigl(1+\frac{c_1}{\la}+\frac{c_2}{\la^2}+\dots\Bigr)
\]
where $m$ denotes the order of the pole $K(\la)$ at $\la=\infty$. Notice,
that the coefficients of the expansion (\ref{hla}) define the conservation
laws of the equations (\ref{dig}), since
\begin{equation}\label{dib}
 D_i(H)=D_i\Bigl(\frac{\la^n}{G}\Bigr)=\la^n\frac{GG_{i,x}-G_xG_i}{G^2}
  =HG_{i,x}+H_xG_i=D_0(H\, G_i).
\end{equation}
Thus, in the polynomial case the form of the potential $U$ is defined by the
first term in the formula (\ref{ukg}) and if the degree $m$ of the
polynomial $K(\la)$ exceeds the degree $2n$ of the polynomial $G^2(\la)$ by
1 or 2 then, correspondingly,
\[
 \frac{K(\la)}{4G^2}=
 \begin{cases}
  \la+2h_1+c_1+\dots, \\
  \la^2+\la(2h_1+c_1)+h_1^2+2h_2+2c_1h_1+c_2+\dots\,.
 \end{cases}
\]
Since the rest terms in the formula (\ref{ukg}) vanish at $\la=\infty$,
hence we obtain in the latter case, at $m=2n+2$,
\[
 U=\la^2+u_1\la+u_2,\quad u_1=2h_1+c_1,\quad u_2=h_1^2+2h_2+2c_1h_1+c_2.
\]
Analogously, in the case $m=2n+1$
\begin{equation}\label{u1}
 U=\la+u(x),\quad u(x)=c_1+2h_1=2\sum_{j=1}^n\g_j(x)-\sum_{i=1}^{2n+1}e_i
\end{equation}
where $e_i$, $i=1,2,\dots$ denote $2n+1$ zeroes of the polynomial $K(\la)$.
It is well known \cite{NMPZ} that in the case of the real potential the
necessary and sufficient condition of its regularity is that the initial
values of $\g_i$ lie in the restricted gaps of the spectrum:
\[
 e_1<e_2<g_1<e_3<\dots<e_{2n}<\g_n<e_{2n+1}.
\]

If the function $K(\la)$ possesses, for example, the first order pole at
$\la=0$ in addition to the pole of order $m=2n+1$ at infinity then the
equation (\ref{ukg}) defines the potential in the form
\begin{equation}\label{ito}
 U=\la+u+v/\la=\la+2h_1+c_1+\frac{1}{\la}(h_1^2+2h_2-\frac12h_{1,xx}).
\end{equation}
In this case the scheme of construction of particular solutions similar to
the presented above can be found in the paper \cite{2006}, see also
\cite{ush}.

The observation that the map (\ref{ukg}) is factorizable gives additional
possibilities of the classification of the polynomial solutions $G(\la)$.
Consider the intermediate map $G\to V$ defined as follows
\begin{equation}\label{vkg}
 V=\frac{G_x+\ka(\la)}{2G}.
\end{equation}
Then, in virtue of (\ref{ugf}),
\[
 U=V_x+V^2,\quad K(\la)=\ka^2(\la).
\]
Moreover, the regularity condition $V$ is equivalent, due to (\ref{kapj}),
to the relations $\ka_i=-\ka$, $i=1,\dots,n$ (cf (\ref{kak})). Like in the
case (\ref{ukg}), the regularity condition guarantees that if $\ka(\la)$ is
polynomial then $V$ is polynomial as well. In particular, the choice of the
polynomial $\ka(\la)$ of $m=n+1$ degree we find, analogous to (\ref{u1})
that
\begin{equation}\label{v1}
 V=\la+v(x),\quad v=-g_1,\quad U=\la^2+2v\la+v^2+v_x.
\end{equation}

In order to illustrate the variety of the applications of the regularity
conditions of the form (\ref{kak}) and Theorem \ref{th:polsol} we present a
list of scalar partial differential equations which correspond to equations
(\ref{dig}) in the cases (\ref{u1}) and (\ref{v1}). Since the analytic
structure of $U$ and $V$ in our case is essentially simpler than the one of
$G$, hence it is natural to rewrite the equations (\ref{dig}) in terms of
$U$ and $V$. To do this, it is sufficient to notice that
\begin{equation}\label{dfv}
 G_t=RG_x-GR_x\quad\Rightarrow\quad V_t=\Bigl(RV-\frac12R_x\Bigr)_x
\end{equation}
and analogously
\begin{equation}\label{dfu}
 G_t=RG_x-GR_x\quad\Leftrightarrow\quad f_t=\Bigl(R f-\frac12R_x\Bigr)_x
 \quad\Rightarrow\quad
 U_t=2UR_x+U_xR-\frac12R_{xxx}.
\end{equation}

\begin{example}\label{ex:kdv}
The choice $R=G_1=\la+g_1$ leads, in the case (\ref{v1}), (\ref{dfv}) to the
Burgers equation
\begin{equation}\label{brgs}
 v_t=v_{xx}+2vv_x=[(\la+g_1)(\la+v)-g_{1,x}]_x,
\end{equation}
and in the case (\ref{u1}), (\ref{dfu}) to the Korteweg-de Vries equation
\begin{equation}\label{kdv}
 4u_t=u_{xxx}-6uu_x+\eps u_x.
\end{equation}
On the other hand, the use in (\ref{dfu}) of the equation for $f$, allows
easily to obtain the modified KdV equation. Indeed, the second equation
(\ref{dfu}) gives at $R=G_1=\la+g_1$
\[
 D_1(f)=D_x\Bigl(\la f+fg_1-\frac12g_{1,x}\Bigr),\quad f_x+f^2=U=\la-2g_1+c_1.
\]
We find, by substitution of the expression of $g_1$ through $f$ into the
first equation under consideration:
\begin{equation}\label{mkdv}
 4f_t=D_x(f_{xx}-2f^3+6\la f)+\eps f_x.
\end{equation}
The relation between the solutions of equations (\ref{kdv}), (\ref{mkdv})
defines the  Miura map $u=f_x+f^2+\la$ depending on $\la$.

Finally, we demonstrate that the original system of equations (\ref{gjtau})
leads, in the case $U=\la+u$, directly to sinh-Gordon equation:
\begin{equation}\label{sinh}
 \ph_{x\tau}=\frac12k_1e^{-\ph}-2e^\ph,\quad e^\ph=G(0),\quad
 k_1=\frac{dK}{d\la}\big|_{\la=0},\quad (K(0)=0).
\end{equation}
Indeed, we obtain from (\ref{gjtau}), (\ref{ukg}), correspondingly,
\begin{equation}\label{tig}
 g_\tau=\<\ti g,g\>=\ti gg_x-g\ti g_x,\quad 4(\la+u)G^2=K(\la)-G_x^2+2G_{xx}G
\end{equation}
where
\[
 g=G(0),\quad \ti g=\frac{dG}{d\la}\Big|_{\la=0}.
\]
We find by differentiating with respect to $\la$ of the second equation
(\ref{tig}) and then eliminating $u$:
 \[
  2g^2-\frac12k_1+k_0\frac{\ti g}{g}+\<\ti g,g\>_x
  -(\log g)_x\<\ti g,g\>=0,\quad k_0\defeq K(0).
\]
Next, setting $k_0=0$ and replacing $\<\ti g,g\>$ with $g_\tau$ bring the
latter equation to (\ref{sinh}), after the obvious transformations.
\end{example}

To conclude the section we consider briefly the applications of equation
(\ref{sme}) to the question of integrability of the analogs of equations
(\ref{dfu}), (\ref{dfv}) discussed in the Example \ref{ex:kdv} obtained by
the change of polynomials by solutions $G(\la)$ of the form (\ref{ab}).
Notice that if we increase the degree of the polynomials $G(\la)$ then the
solutions corresponding to the degree $n$ remain automatically the solutions
of the system (\ref{dig}) after the change $n$ by $n+1$. Therefore it is
natural to suggest that many formulae of the previous Section, including
(\ref{ugf})--(\ref{dfu}), remain valid in the general case (\ref{ab}). The
proof follows from the comparison of the equations (\ref{dig}) and
(\ref{dna}) under the change $G(\la)\to A(\la)$ where
\begin{equation}\label{aser}
 A=1+\sum_{j=1}^{\infty}\la^{-j}a_j
  =1+\frac{a_1}{\la}+\frac{a_2}{\la^2}+\dots\,.
\end{equation}
The main difference between the equations for the coefficients of the series
(\ref{aser}) under consideration
\begin{equation}\label{dnaa}
 D_n A(\la)=\<A_n(\la),A(\la)\>,\quad n=1,2,3,\dots
\end{equation}
and the corresponding equations (\ref{dig}) is the problem of closing of the
infinite chain of equations of the form (\ref{dta}).\footnote{From the
algebraical point of view this problem is, probably, equivalent to the
choice of the corresponding subalgebras of the algebra (\ref{da}).} The
papers \cite{alonso_shabat} are devoted to the problems which arise after
the change of regularity conditions by the conditions of finiteness of the
series
\begin{equation}\label{UV}
 U=\la^m+\sum_{j\ge1}\la^{m-j}u_j,\quad
 V=\la^{m'}+\sum_{j\ge1}\la^{m'-j}v_j
\end{equation}
obtained from (\ref{aser}) via the maps (\ref{ukg}) and (\ref{vkg}). These
papers contain also some interesting examples of the nonstandard closing of
equations (\ref{dnaa}).

\appendix
\section{Appendix}

\subsection{Riccati equation}

We prove the following elementary statement in order to establish the
relation of the map (\ref{ukg}) and equation (\ref{ugf}) with the Schwarz
derivative and Riccati equation.

\begin{statement}
Let $\psi_1$, $\psi_2$ be two linearly independent solutions of the second
order equation $\psi_{xx}=U\psi$, then the functions
\[
 A_1=\psi_1^2,\quad A_2=\psi_2^2,\quad A_3=\psi_1\psi_2
\]
form the basis of the solution space of the third order equation (cf
(\ref{***})):
\[
 A_{xxx}=4UA_{x}+2U_x A.
\]
Moreover the function $\ph=\psi_1/\psi_2$ satisfies Schwarz equation
\begin{equation}\label{schwarz}
 \frac{3\ph^2_{xx}}{4\ph^2_x}-\frac{\ph_{xxx}}{2\ph_x}=U(x),
\end{equation}
and the function $A=\psi_1\psi_2$ satisfies equation
\begin{equation}\label{**}
 4U(x)A^2+A_x^2-2AA_{xx}= w^2
\end{equation}
where $w$ is the Wronskian
$w=\<\psi_1,\psi_2\>=\psi_1\psi_{2,x}-\psi_{1,x}\psi_2$.
\end{statement}
\begin{proof}
The Wronskian of $A_1$, $A_2$, $A_3$ is
\[
 W=\<A_1,A_2,A_3\>=(\psi_1\psi_{2,x}-\psi_2\psi_{1,x})^3
  =\<\psi_1,\psi_2\>^3.
\]
therefore, $W=\const\ne0$ and the functions $A_i$ are linearly independent.
Let us denote
\[
 \ph=\frac{\psi_1}{\psi_2},\quad f_j=\frac{\psi_{j,x}}{\psi_j}.
\]
It is not difficult to prove that
\[
 \ph_x=\frac{\<\psi_2,\psi_1\>}{\psi_2^2}
      =\frac{w}{\psi_2^2}, \quad
 \frac{\ph_{xx}}{\ph_x}= -2\frac{\psi_{2x}}{\psi_2}=-2f_2
\]
and therefore equation (\ref{schwarz}) follows from the Riccati equation
$f_{2,x}+f_2^2=U$.

Next, it is easy to check that
\[
 \frac{w}{A_3}=f_2-f_1,\quad \frac{A_{3,x}}{A_3}=f_2+f_1,
\]
and therefore
\[
 f_1=\frac{A_{3,x}-w}{2A_3},\quad f_2=\frac{A_{3,x}+w}{2A_3}.
\]
The substitution of these expressions for $f_j$ into Riccati equation brings
to equation (\ref{**}) with $A=A_3$ and then the differentiation with
respect to $x$ yields (\ref{***}). The proof of the fact that $A_1$, $A_2$
also satisfy (\ref{***}) is analogous, in these cases $w=0$ in (\ref{**}).
\end{proof}

Let us rewrite equation (\ref{**}) in the form
\[
  4U+\frac{A_x^2}{A^2}-\frac{2 A_{xx}}{A}=\frac{w^2}{A^2}
\]
and denote $H=1/A$. Then we find that
\begin{equation} \label{modSchwarz}
  U=\frac{3H_x^2}{4H^2}-\frac{H_{xx}}{2H}+w^2H^2.
\end{equation}
If follows immediately from the comparison of this equation with Schwarz
equation (\ref{schwarz}) that it correspond to the special case
(\ref{modSchwarz}) with $w=0$ and $H=\ph_x$.

Therefore we prove that the change
\[
  F\defeq f_1-f_2=\frac{w}{G},\quad
  \frac{3F_x^2}{4F^2}-\frac{F_{xx}}{2F}+\frac{F^2}{4}=U
\]
brings to equation (\ref{ukg}) used for the definition of the map $G\to U$
(in Section \ref{s:trace}). Taking into account the obvious relation between
the formula (\ref{modSchwarz}) with the Schwarz derivative one may say that
this transition from $G$ to $U$ is equivalent in some sense to the
computation of the Schwarz derivative of $F=G^{-1}$.

Now consider the {\em direct problem} of construction of $G$ on the given a
priori potential $U$. We assume that the potential is defined as the formal
series (\ref{UV}):
\[
 U=\la^m+\sum_{j\ge1}\la^{m-j}u_j,\quad m>0.
\]
It turns out that the difficulties which appear if we try to solve the
Riccati equation explicitly disappear on the level of the formal series
(\ref{aser}). For the sake of simplicity we restrict ourselves by the
particular case $m=2$.

\begin{lemma}
The Riccati equation with the potential
\begin{equation}\label{u2}
 U(x,k)=\la^2+\la u_1(x)+u_2(x)+\la^{-1}u_3+\dots
\end{equation}
has exactly two solutions $f=f^\pm$ in the form of the formal power series
in $\la$:
\begin{equation}\label{fpm}
 f^+= \la+\frac12u_1+f^+_1\la^{-1}+\dots,\quad
 f^-=-\la-\frac12u_1+f^-_1\la^{-1}+\dots\ .
\end{equation}
\end{lemma}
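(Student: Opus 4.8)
\emph{Plan of proof.} The plan is to substitute a formal series ansatz into the Riccati equation $f_x+f^2=U$ and to pin down its coefficients one by one by comparing powers of $\la$. Write $f=\sum_{j\le N}f_j(x)\la^j$ with $f_N\not\equiv0$. The right-hand side $U$ in (\ref{u2}) has top degree $\la^2$, while $f^2$ has top degree $\la^{2N}$ with coefficient $f_N^2$, and $f_x$ has degree only $N$. Hence if $N\ge2$ the coefficient of $\la^{2N}$ in $f_x+f^2$ equals $f_N^2$ and must vanish, a contradiction; and $N\le0$ cannot produce the term $\la^2$ at all. So $N=1$, and the $\la^2$-equation reads $f_1^2=1$. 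Therefore $f_1$ is a constant equal to $+1$ or to $-1$, and this single binary choice is the origin of the two branches $f^+$ and $f^-$.

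Having fixed the sign of $f_1$ (with $f_1\ne0$, $f_{1,x}=0$), I would run the triangular recursion. Matching the coefficient of $\la^{2-k}$, $k=1,2,\dots$, the term $f_x$ contributes $f_{2-k,x}$ (a coefficient with smaller index, already known), and $f^2$ contributes $\sum_{i=0}^{k}f_{1-i}f_{1-k+i}$, in which the only occurrence of the ``new'' unknown $f_{1-k}$ is the cross term $2f_1f_{1-k}$ coming from $i=0$ and $i=k$. This gives
\[
 2f_1f_{1-k}=u_k-f_{2-k,x}-\sum_{i=1}^{k-1}f_{1-i}f_{1-k+i},
\]
so $f_{1-k}$ is determined explicitly in terms of $u_k$, of $f_1,\dots,f_{2-k}$, and of one $x$-derivative. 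In particular $k=1$ gives $f_0=u_1/(2f_1)=\pm\tfrac12u_1$ and $k=2$ gives $f_{-1}=\bigl(u_2-f_0^2-f_{0,x}\bigr)/(2f_1)$, which matches the leading terms displayed in (\ref{fpm}).

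By induction, once the sign $f_1=\pm1$ is chosen, every coefficient is uniquely and constructively determined; hence there is exactly one formal series solution for each of the two admissible values of $f_1$, and these are the series $f^\pm$. I do not expect any real obstacle: the recursion is purely algebraic and triangular, and since only a formal solution is asserted no growth estimate is needed. The only points that deserve a careful word are the two ``boundary'' steps singled out above --- that the ansatz cannot carry powers of $\la$ above the first, and that $f_1$ is a genuine constant, so that $f_{1,x}$ drops out of the $\la^1$-equation that fixes $f_0$.
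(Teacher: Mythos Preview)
Your proof is correct and follows essentially the same route as the paper: substitute a formal Laurent ansatz into the Riccati equation, read off from the top power that the leading coefficient squares to $1$, and then solve the resulting triangular recursion for the remaining coefficients. Your write-up is in fact a little more careful than the paper's own proof in two places---you argue explicitly why the ansatz cannot carry powers above $\la^1$, and you point out that $f_1=\pm1$ is a genuine constant so that $f_{1,x}$ drops out at the next step---but the underlying argument is the same.
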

\begin{proof}
The formulae for the coefficients of $f\pm_0$ are obtained by substitution
of the series $f=\sum^\infty_{j=j_0}\la^{-j}f_j$ into the equation
$f_x+f^2=U(x,\la)$. Collecting the coefficients at the powers of $\la$ we
find $(f^{\pm}_{-1})^2=1$, $j_0=-1$. The next coefficients of $f^+_j$ and
$f^-_j$ are calculated recursively along the relations
\[
 2f^+_{j+1}+f^+_{j,x}+\sum_{j'+j''=j}f^+_{j'}f^+_{j''}=u_j,\quad
 2f^-_{j+1}=f^-_{j,x}+\sum_{j'+j''=j} f^-_{j'} f^-_{j''}-u_j\qquad (j>1).
\]
Therefore the coefficients of the series $f^{\pm}_j$ are uniquely defined
differential polynomials on $u_j$.
\end{proof}

\subsection{The consistent triples of equations}\label{app:triples}

The equation (\ref{mu12}) gives an example of consistent triple of the
form
\begin{equation}\label{fgh}
\begin{aligned}
 u_{\xi\eta}  &=f(u,u_\xi,u_\eta, u_x,u_{x\xi},u_{x\eta}),\\
 u_{\xi\zeta} &=g(u,u_\xi,u_\zeta,u_x,u_{x\xi},u_{x\zeta}),\\
 u_{\eta\zeta}&=h(u,u_\eta,u_\zeta,u_x,u_{x\eta},u_{x\zeta})
\end{aligned}
\end{equation}
(here and further on $(\tau_1,\tau_2,\tau_3)=(\xi,\eta,\zeta)$). A more
symmetric example is given by the triple
\begin{equation}\label{ex2}
 (\mu_j-\mu_i)u_xu_{\tau_i\tau_j}-\mu_ju_{\tau_i}u_{x\tau_j}
  +\mu_iu_{\tau_j}u_{x\tau_i}=0,
\end{equation}
where the variable $x$ is actually on the equal footing with the other ones.
Notice that this equation is invariant under the changes $u=a(\tilde u)$,
$x=b(\tilde x)$, $\tau_i=c_i(\tilde\tau_i)$ with arbitrary functions, so
that a particular solution is $u=a(b(x)c(\tau_i)d(\tau_j))$.

Equations (\ref{mu12}) and (\ref{ex2}) are in close relation. Indeed,
(\ref{ex2}) is obtained from (\ref{mu12}) after eliminating the derivatives
$u_{x\tau_i}$ and the choice of one of $\tau_i$ as the new $x$. Moreover,
the equation (\ref{ex2}) can be rewritten in the form
\[
  \left(\frac{u_{\tau_i}}{\mu_iu_x}\right)_{\tau_j}
 =\left(\frac{u_{\tau_j}}{\mu_ju_x}\right)_{\tau_i}
\]
and the substitution $v_{\tau_i}=u_{\tau_i}/(\mu_iu_x)$ brings to the
equation (\ref{mu12}) again:
\[
 (\mu^{-1}_i-\mu^{-1}_j)v_{\tau_i\tau_j}=v_{\tau_i}v_{x\tau_j}-v_{\tau_j}v_{x\tau_i}.
\]

The classification problem of the consistent systems of the form (\ref{fgh})
seems not actual until the integration methods are not well understood.
Nevertheless, the simple preliminary analysis shows that the class of such
systems can be rich enough. Let us assume that the right hand sides are
quasilinear and do not depend on $u$ explicitly. This subclass of equations
contains the following family of consistent triples.

\begin{statement}
Let $(X,\ti X)$, $(A,\ti A)$, $(B,\ti B)$, $(C,\ti C)$ be solutions of the
ODE system
\begin{equation}\label{kk}
 X'=k_1X^2+k_2X\ti X+k_3\ti X^2,\quad \ti X'=k_4X^2+k_5X\ti X+k_6\ti X^2,
\end{equation}
and the following functions are not identically zero:
\begin{alignat*}{3}
 a(u_\eta,u_\zeta)&=B(u_\eta)\ti C(u_\zeta)-\ti B(u_\eta)C(u_\zeta),&\qquad
 p(u_\xi,u_x)&=A(u_\xi)\ti X(u_x)-\ti A(u_\xi)X(u_x),\\
 b(u_\zeta,u_\xi)&=C(u_\zeta)\ti A(u_\xi)-\ti C(u_\zeta)A(u_\xi),&
 q(u_\eta,u_x)&=B(u_\eta)\ti X(u_x)-\ti B(u_\eta)X(u_x),\\
 c(u_\xi,u_\eta)&=A(u_\xi)\ti B(u_\eta)-\ti A(u_\xi)B(u_\eta),&
 r(u_\zeta,u_x)&=C(u_\zeta)\ti X(u_x)-\ti C(u_\zeta)X(u_x).
\end{alignat*}
Then the following equations are consistent:
\[
\begin{aligned}
 c(u_\xi,u_\eta)u_{\xi\eta}&=p(u_\xi,u_x)u_{x\xi}-q(u_\eta,u_x)u_{x\eta},\\
 a(u_\eta,u_\zeta)u_{\eta\zeta}&=q(u_\eta,u_x)u_{x\eta}-r(u_\zeta,u_x)u_{x\zeta},\\
 b(u_\zeta,u_\xi)u_{\zeta\xi}&=r(u_\zeta,u_x)u_{x\zeta}-p(u_\xi,u_x)u_{x\xi}.
\end{aligned}
\]
\end{statement}

The linear changes bring the second equation of the system (\ref{kk}) to the
form
\[
 \ti X'=kX\ti X \qquad\text{or}\qquad \ti X'=k\ti X^2,
\]
and the general solution can be found in quadratures. In particular, the
system (\ref{ex2}) correspond to the functions $A_i=\mu_i/u_{\tau_i}$, $\ti
A_i=1/u_{\tau_i}$, $X=\d/u_x$, $\ti X=1/u_x$, which solve the system
$X'=-X\ti X$, $\ti X'=-\ti X^2$. In this case one of the integration
constants can be neglected without loss of generality while the second one
plays the role of parameter in the resulting system. In the examples
\begin{gather*}
  (u_{\tau_i}-u_{\tau_j})u_{\tau_i\tau_j}
 +(u_{\tau_j}-u_x)u_{x\tau_j}+(u_x-u_{\tau_i})u_{x\tau_i}=0,\\
 u_{\tau_i\tau_j}=\frac{u_{x\tau_i}-u_{x\tau_j}}{e^{u_{\tau_i}}-e^{u_{\tau_j}}}
\end{gather*}
both integration constants are not essential.

\paragraph{Acknowledgements.}
This work has been supported by RFBR grants \#04-01-00403,
06-01-92051-KE\_a.



\begin{thebibliography}{99}

\bibitem{alonso_shabat}{\em L. Mart\'\i nez Alonso, A.B. Shabat.}
 Energy dependent potentials revisited. A universal hierarchy of hydrodynamic type,
 Phys. Lett. A. 2002. V.300. N.1. P.58--64;

 {\em L. Mart\'\i nez Alonso, A.B. Shabat.}
 Towards a theory of differential constraints of a hydrodynamic hierarchy,
 J. Nonl. Math. Phys. 2003. V.10. N.2. P.229--242;

 {\em L. Mart\'\i nez Alonso, A.B. Shabat.}
 Hydrodunamic reductions and solutions of unviversal hierarchy,
 Theor. Math. Phys. 2004. V.140. N.2. P.1073--1085;

 {\em A.B. Shabat, L. Mart\'\i nez Alonso.}
 On the prolongation of a hierarchy of hydrodynamic chains.
 In: New Trends in Integrability and Partial Solvability.
 Eds. A.B. Shabat et al. Kluwer Acad. Publ., Netherlands, 2004. P.263--280.

\bibitem{ush}{\em A.B. Shabat.}
 Universal Solitonic Hierarchy,
 J. Nonl. Math. Phys. 2005. V.12. N.1 (suppl). P.1--11.

\bibitem{FKPT}{\em E.V. Ferapontov, K.R. Khusnutdinova, M.V. Pavlov.}
 On the classification of integrable $(2+1)$-dimensional quasilinear
 hierarchies. Theor. Math. Phys. 2005. V.144. N.1. P.35--43;

 {\em E.V. Ferapontov, K.R. Khusnutdinova, S.P. Tsarev.}
 On a class of three-dimensional integrable Lagrangians,
 Commun. Math. Phys. 2006. V.261. N.1. P.225--243.

\bibitem{EF}{\em E.V. Ferapontov.}
 Integration of Weakly Nonlinear Hydrodynamic Systems in Riemann Invariants,
 Phys. Lett. A. 1991. V.158. P.112--118.

\bibitem{2006}{\em A. Hone, V.S. Novikov, C. Verhoeven.}
 An integrable hierarchy with a perturbed Henon-Heiles system,
 Inverse Problems. 2006. V.22. P.2001--2020.

\bibitem{IbSh}{\em N.Kh. Ibragimov, A.B. Shabat.}
 Group theoretical approach to the Korteweg-de Vries equation,
 Doklady AN SSSR. 1979. V.244. N.1. P.57--61.

\bibitem{NMPZ}{\em S.P. Novikov, S.V. Manakov, L.P. Pitaevskii, V.E. Zakharov.}
 Theory of Solitons. The Inverse Scattering Method, New York: Plenum, 1984.

\end{thebibliography}
\end{document}